% ------------------------------------------------------------------------
% bjourdoc.tex for birkjour.cls*******************************************
% ------------------------------------------------------------------------
%%%%%%%%%%%%%%%%%%%%%%%%%%%%%%%%%%%%%%%%%%%%%%%%%%%%%%%%%%%%%%%%%%%%%%%%%%

\documentclass{birkjour}
%
%
% THEOREM Environments (Examples)-----------------------------------------
%
\newtheorem{thm}{Theorem}[section]

\newtheorem{lem}[thm]{Lemma}

\numberwithin{equation}{section}

\begin{document}

%-------------------------------------------------------------------------
% editorial commands: to be inserted by the editorial office
%
%\firstpage{1} \volume{228} \Copyrightyear{2004} \DOI{003-0001}
%
%
%\seriesextra{Just an add-on}
%\seriesextraline{This is the Concrete Title of this Book\br H.E. R and S.T.C. W, Eds.}
%
% for journals:
%
%\firstpage{1}
%\issuenumber{1}
%\Volumeandyear{1 (2004)}
%\Copyrightyear{2004}
%\DOI{003-xxxx-y}
%\Signet
%\commby{inhouse}
%\submitted{March 14, 2003}
%\received{March 16, 2000}
%\revised{June 1, 2000}
%\accepted{July 22, 2000}
%
%
%
%---------------------------------------------------------------------------
%Insert here the title, affiliations and abstract:
%
\title[Logarithmic uncertainty principle for Clifford wavelet transform]{Logarithmic uncertainty principle for Clifford-wavelet transform}
%----------Author 1
\author[Arfaoui]{Sabrine Arfaoui}
\address{Sabrine Arfaoui, $^1$Laboratory of Algebra, Number Theory and Nonlinear Analysis LR18ES15, Department of Mathematics, Faculty of Sciences, University of Monastir, Monastir 5019, Tunisia.\\ $^2$Higher Institute of Computer Science El-Kef, University of Jendouba, 5 Street of Saleh Ayech - Kef 7100, Tunisia.}
\email{sabrine.arfaoui@issatm.rnu.tn}
%----------Author 2
\author[Banouh]{Hicham Banouh}
\address{Departement of Mathematics, Faculty of Sciences and Applied Sciences, University of Akli Mohand Oulhadj of Bouira, Bouira 10000, Algeria.}
\email{h.banouh@univ-bouira.dz}
%----------Author 3
\author[Ben Mabrouk]{Anouar Ben Mabrouk}
\address{$^1$Higher Institute of Applied Mathematics and Computer Science, University of Kairaouan, Street of Assad Ibn Alfourat, Kairaouan 3100, Tunisia.\\
$^2$Laboratory of Algebra, Number Theory and Nonlinear Analysis LR18ES15, Department of Mathematics, Faculty of Sciences, Monastir 5019, Tunisia.\\
$^3$Department of Mathematics, Faculty of Sciences, University of Tabuk, King Faisal Rd, Tabuk 47512, Saudi Arabia.}
\email{anouar.benmabrouk@fsm.rnu.tn}

%----------classification, keywords, date
\subjclass{30G35, 42C40, 42B10, 15A66.}
\keywords{Harmonic analysis, Clifford algebra, Clifford analysis, Continuous wavelet transform, Clifford Fourier transform, Clifford wavelet transform, Uncertainty principle.}

%\date{}
%----------additions
%\dedicatory{To my boss}
%%% ----------------------------------------------------------------------
\begin{abstract} % abstract
The uncertainty principle constitutes one of the famous physical concepts which continues to attract researchers from different related fields since its discovery due to its utility in many applications. Among the classical (Fourier-based) forms, modified variants such as wavelet-based and logarithmic versions have been investigated by researchers. In the present work we are concerned with the development of a new logarithmic uncertainty principle combining both wavelets and logarithmic variants in the Clifford wavelet transform framework.
\end{abstract}
\maketitle
%\tableofcontents
\section{Introduction}
In the last decade(s) a coming back to Clifford algebras and analysis has taken place in many fields such as mathematics, physics and computer science. This may be due to the fact that Clifford algebras incorporate inside one single structure the geometrical and algebraic properties of Euclidean space. Incite of a Clifford algebra geometric entities are treated according to their dimension such as scalars, vectors, bivectors and volume elements. This allowed researchers especially in mathematical physics to develop harmonic analysis on Clifford algebras as extension of complex one. Famous operators and related inequalities
such as Cauchy-Riemann operator and uncertainty principle have been extended to the Clifford framework.

Recently extensions of wavelet theory to the Clifford framework have been developed and applied in many fields, and consequently connection between wavelet theory and Clifford algebra/analysis is now growing up, and applications are widespread such as in signal/image processing in physics, nuclear magnetics, electric engineering, colour images.
See \cite{DeBie2,Hitzer2,Hitzer-Mawardi-1,Mawardi-Hitzer-1}.

Wavelet analysis of functions starts by checking that the mother wavelet is admissible to guarantee the reconstruction, Fourier-Plancherel and Parseval rules.

The present work is concerned with the development of a sharp uncertainty principle in the Clifford framework based on Clifford wavelets and thus extending the one developed in \cite{Banouh2,Banouh4}. We propose a new Heisenberg uncertainty principle applied to continuous Clifford wavelet transform in the settings of the non commutative Clifford algebras. In section 2 some preliminaries on Clifford toolkit and
wavelets are presented. We also recall briefly the most recent existing works on the uncertainty principle in both its classical forms and Clifford one. Section 4 is devoted to the development of our main results. 
\section{Clifford wavelets toolkit}
In this section, we review briefly the basic concepts of Clifford analysis and the continuous wavelet transform on the real Clifford algebra as well as the uncertainty principle .

The Clifford algebra $\mathbb{R}_{n}$ associated to $\mathbb{R}^{n}$ is the $2^{n}$-dimensional non commutative associative algebra generated by the canonical basis $\left\{ e_{1},e_{2},\dots,e_{n}\right\}$ relatively to the non commutative product
$$
e_{i}e_{j}+e_{j}e_{i}=-2\delta_{ij},
$$
where $\delta$ is the Kronecker symbol. Any element $a\in\mathbb{R}_{n}$ has a unique representation 
\[
a=\sum_{A}a_{A}e_{A}=\sum_{k=0}^{n}\sum_{\left|A\right|=k}a_{A}e_{A},\;a_{A}\in\mathbb{R},
\]
where $|A|$ is the length of the multi-index $A$. We define here the module (or the norm) of $a$ in the Clifford algebra $\mathbb{R}_n$ by
\[
|a|=(\sum_{A}|a_{A}|^2)^{1/2},
\]
where for $x\in\mathbb{R}$, $|x|$ stands for its absolute value. 

On the algebra $\mathbb{R}_{n}$ there are some types of involutive operators defined on the basis elements. A main-involution satisfying
$$
\widetilde{e_{A}}=(-1)^{\left|A\right|}e_{A}.
$$
A second type called reversion stating that 
$$
e_{A}^{\ast}=(-1)^{\frac{\left|A\right|(\left|A\right|-1)}{2}}e_{A}
$$
and finally a conjugation operator defined by 
$$
\overline{e_{A}}=(-1)^{\frac{|A|(|A|+1)}{2}}e_{A}.
$$
In the case of the complex extension $\mathbb{C}_{n}=\mathbb{R}_{n}+i\mathbb{R}_{n}$
the last operator has to be extended to the so-called Hermitian conjugation
defined by 
$$
\lambda^{\dagger}=\overline{a}-i\overline{b},
$$
for $\lambda=a+ib\in\mathbb{C}_{n}$ with $a,b\in\mathbb{R}_{n}$. The module (or the norm) of $\lambda$ in the Clifford algebra, will be 
$$
|\lambda|^2=\lambda^{\dagger}\lambda=|a|^2+|b|^2,
$$
As $\mathbb{R}^{n}$ is a subspace of $\mathbb{R}_{n}$ any element $x=(x_{1},x_{2},\dots,x_{n})\in\mathbb{R}^{n}$ is identified to $\underline{x}={\displaystyle \sum_{j=1}^{n}e_{j}x_{j}}$. This allows to introduce the Clifford product of vectors by 
$$
\underline{x}\underline{y}=-<\underline{x},\underline{y}>+\underline{x}\wedge\underline{y},
$$
where $<\underline{x},\underline{y}>$ is the usual inner product and $\underline{x}\wedge\underline{y}$ is the exterior product on $\mathbb{R}^{n}$. 

Finally we recall that for two functions $f,g:\mathbb{R}^{n}\longrightarrow\mathbb{R}_{n}$ (the same apply for functions with values in $\mathbb{C}_{n}$) the
inner product is defined by 
\begin{equation}
<f,g>_{L^{2}(\mathbb{R}^{n},\mathbb{R}_{n},dV(\underline{x}))}=\int_{\mathbb{R}^{n}}\left[f(\underline{x})\right]^{\dagger}g(\underline{x})dV(\underline{x})\label{eq:inner product}
\end{equation}
and the associated norm by 
\[
\|f\|_{L^{2}(\mathbb{R}^{n},\mathbb{R}_{n},dV(\underline{x}))}=<f,f>_{L^{2}(\mathbb{R}^{n},\mathbb{R}_{n},dV(\underline{x}))}^{\frac{1}{2}}.
\]
We denote also 
\[
\left\Vert f\right\Vert _{L^{1}(\mathbb{R}^{n},\mathbb{R}_{n},dV(\underline{x}))}=\int_{\mathbb{R}^{n}}|f(\underline{x})|dV(\underline{x})
\]
where $dV(\underline{x})$ stands for the Lebesgue measure, and $|f(x)|^2=\left[f(\underline{x})\right]^{\dagger}f(\underline{x})$

As for multidimensional Euclidean spaces the Fourier and wavelet transforms have been extended to the case of Clifford framework. The Clifford-Fourier transform of a Clifford-valued function $f\in L^{1}\cap L^{2}(\mathbb{R}^{n},\mathbb{R}_{n},dV(\underline{x}))$
is defined by 
\[
\mathcal{F}\left[f\right](\underline{\xi})=\widehat{f}(\underline{\xi})=\frac{1}{(2\pi)^{\frac{n}{2}}}\int_{\mathbb{R}^{n}}e^{-i<\underline{x},\underline{\xi}>}f(\underline{x})dV(\underline{x}).
\]
In the present paper we use spin theory as in \cite{Banouh1,Banouh2,Banouh4}.
Let $\psi\in L^{1}\cap L^{2}(\mathbb{R}^{n},dV(\underline{x}))$ be
such that 
\begin{itemize}
\item ${\psi}$ is a Clifford-valued function. 
\item $\widehat{\psi}(\underline{\xi})\left[\widehat{\psi}(\underline{\xi})\right]^{\dagger}$ is scalar. 
\item $\mathcal{A}_{\psi}={\displaystyle (2\pi)^{n}\int_{\mathbb{R}^{n}}\frac{\widehat{\psi}(\underline{\xi})\left[\widehat{\psi}(\underline{\xi})\right]^{\dagger}}{|\underline{\xi}|^{n}}dV(\underline{\xi})<\infty.}$ 
\end{itemize}
$\psi$ is called a Clifford mother wavelet. The spin group of order $n$ is defined by 
\[
S_{n}=\left\{ s\in\mathbb{R}_{n};\;s={\displaystyle \prod\limits _{j=1}^{2l}}\underline{\omega}_{j},\;\underline{\omega}_{j}^{2}=-1,1\leq j\leq2l\right\} .
\]
For $(a,\underline{b},s)\in\mathbb{R}_{+}\times\mathbb{R}^{n}\times S_{n}$, we denote 
\[
\psi^{a,\underline{b},s}(\underline{x})=\frac{1}{a^{\frac{n}{2}}}s\psi(\frac{\overline{s}(\underline{x}-\underline{b})s}{a})\overline{s}.
\]
In \cite{Banouh1,Banouh2,Banouh4} the authors showed that the copies ${\psi^{a,\underline{b},s}}$ constitute a dense set in $L^{2}(\mathbb{R}^{n},\mathbb{R}^{n},dV(\underline{x}))$.
The Clifford-wavelet transform of an analyzed function $f\in L^{2}(\mathbb{R}^{n},\mathbb{R}^{n},dV(\underline{x}))$
is defined by 
\[
T_{\psi}\left[f\right](a,\underline{b},s)=\int_{\mathbb{R}^{n}}\left[\psi^{a,\underline{b},s}(\underline{x})\right]^{\dagger}f(\underline{x})dV(\underline{x}).
\]
Let $\mathcal{H}_{\psi}=T_{\psi}\left(L^{2}(\mathbb{R}^{n},dV(\underline{x}))\right)$. We define an inner product on $\mathcal{H}_{\psi}$ by 
\[
\left[T_{\psi}\left[f\right],T_{\psi}\left[g\right]\right]=\frac{1}{\mathcal{A}_{\psi}}\int\limits _{S_{n}}\int\limits _{\mathbb{R}^{n}}\int\limits _{\mathbb{R}_{+}}\mathcal{K}_{\psi}\left[f,g\right](a,\underline{b},s)d\mu(a,\underline{b},s),
\]
where 
\[
\mathcal{K}_{\psi}\left[f,g\right](a,\underline{b},s)=(T_{\psi}\left[f\right](a,\underline{b},s))^{\dagger}T_{\psi}\left[g\right](a,\underline{b},s),
\]
$$
d\mu(a,\underline{b},s)=\frac{da}{a^{n+1}}dV(\underline{b})ds,
$$
and where $ds$ is the Haar measure on $S_{n}$.

It is straightforward that the operator $T_{\psi}$ is an isometry from $L^{2}(\mathbb{R}^{n},dV(\underline{x}))$ to $\mathcal{H}_{\psi}$ ,and we have 
\begin{equation}
\int\limits _{Spin(n)}\int\limits _{\mathbb{R}^{n}}\int\limits _{\mathbb{R}^{+}}\left|T_{\psi}\left[f\right](a,\underline{b},s)\right|{}^{2}\frac{da}{a^{n+1}}dV(\underline{b})ds=\mathcal{A}_{\psi}\left\Vert f\right\Vert _{2}^{2}.\label{eq:norme equality}
\end{equation}
See \cite{Banouh1,Banouh2,Banouh4}. As a result any analyzed function in $L^{2}(\mathbb{R}^{n},dV(\underline{x}))$ may be reconstructed in the $L^{2}$-sense by means of its Clifford-wavelet transform which constitutes the Clifford-wavelet Plancherel-Parseval formulas.

Our purpose in the present work is to establish a logarithmic uncertainty principle in the Clifford framework by using the Clifford wavelet continuous transform.

The uncertainty principle originally due to Heisenberg constitutes since its discovery an interesting concept in quantum mechanics. Mathematically, it is resumed by the fact that a non-zero function and its Fourier transform cannot both be sharply localized. In the last decades many studies have been developed on the uncertainty principle using pure
Fourier transform, pure wavelets, Clifford Fourier transform and recently Clifford wavelet transform. We will not review here in details these works. However the readers may refer to \cite{Rachdi-Meherzi,Rachdi-Herch,ElHaoui-Fahlaoui,Hitzer2,Hitzer-Mawardi-1,Mawardi-Ryuichi-1,Mawardi-Ryuichi-2,Mawardi-Hitzer-1,Mawardi-Hitzer-2,Mawardi-Hitzer-3,Mawardi-Hitzer-4,Mawardi-Hitzer-Hayashi-Ashino,Kouetal}.
\section{Logarithmic uncertainty for Clifford Wavelet Transform}
Based on the classical Pitt’s inequality, Beckner \cite{Beckner1995} proved the logarithmic version of Heisenberg’s uncertainty principle. Recently, this principle has been carried out for different two-dimensional time-frequency domain transforms [2,7,19]. Here, we derive the logarithmic inequality in Clifford wavelet transform domains \cite{Brahim2019,Brahim2019a,Chen2015,Gorbachev2016,Beckner1995,Folland1997,Lian2020,Mawardi-Ryuichi-2,ElHaouietal1,ElHaouietal}. 

In \cite{Banouh1,Banouh2,Banouh3,Banouh4}, the authors have established the two following results on the uncertainty principle applied to the continuous Clifford wavelet transform.
\begin{thm}\cite{Banouh1,Banouh2,Banouh3,Banouh4}\label{MainTheorem}
Let $\psi\in L^{2}(\mathbb{R}_{n},dV(\underline{x}))$ be an admissible Clifford mother wavelet. Then, for $f\in L^{2}(\mathbb{R}_{n},dV(\underline{x}))$ the following inequality holds
\begin{equation}
\displaystyle\left(\displaystyle\int_{Spin(n)}\displaystyle\int_{\mathbb{R}^{+}}\left\Vert b_{k}T\left[f\right](a,\cdot,s)\right\Vert _{2}^{2}\frac{da}{a^{n+1}}ds\right)^{\frac{1}{2}}\left\Vert \xi_{k}\widehat{f}\right\Vert _{2}\geq\frac{(2\pi)^{\frac{n}{2}}}{2}\sqrt{A_{\psi}}\left\Vert f\right\Vert _{2}^{2},
\end{equation}
where $k=1,2,\cdots,n$.
\end{thm}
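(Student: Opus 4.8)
The plan is to reduce the statement to the classical one-dimensional Heisenberg inequality applied to the wavelet coefficients, and then to recover the two factors of the claimed product by integrating against the measure $d\mu$ and invoking the isometry \eqref{eq:norme equality}. First I would fix the scale $a$ and the spin $s$ and regard $g_{a,s}(\underline b):=T_\psi[f](a,\underline b,s)$ as an ordinary $\mathbb R_n$-valued function of the translation variable $\underline b\in\mathbb R^n$. For such a function the classical Heisenberg principle in the $k$-th coordinate gives $\|b_k\,g_{a,s}\|_2\,\|\eta_k\,\widehat{g_{a,s}}\|_2\ge\tfrac12\|g_{a,s}\|_2^2$, where the hat denotes the Clifford-Fourier transform in $\underline b$ and $\underline\eta$ is its dual variable.

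The key computational step is a convolution theorem in the translation variable. Writing $\psi^{a,\underline b,s}(\underline x)=h_{a,s}(\underline x-\underline b)$ with $h_{a,s}(\underline z)=a^{-n/2}s\,\psi(\overline s\,\underline z\,s/a)\,\overline s$, a change of variables turns $g_{a,s}$ into a correlation, so that $\widehat{g_{a,s}}(\underline\eta)=(2\pi)^{n/2}\,a^{n/2}\,s\,[\widehat\psi(a\,\overline s\,\underline\eta\,s)]^{\dagger}\,\overline s\,\widehat f(\underline\eta)$. Here I would use $\overline s\,s=1$ together with the hypothesis that $\widehat\psi\,[\widehat\psi]^{\dagger}$ is scalar to collapse the modulus to $|\widehat{g_{a,s}}(\underline\eta)|^2=(2\pi)^n a^n\,|\widehat\psi(a\,\overline s\,\underline\eta\,s)|^2\,|\widehat f(\underline\eta)|^2$; this scalar factorisation is exactly what is needed to separate the $\psi$-part from the $f$-part in the integrations that follow.

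Next I would integrate the pointwise Heisenberg inequality against $d\mu(a,\underline b,s)=a^{-n-1}\,da\,dV(\underline b)\,ds$ and apply the Cauchy-Schwarz inequality on $S_n\times\mathbb R_+$, which produces on the right the quantity $\tfrac12\int_{S_n}\int_{\mathbb R_+}\|g_{a,s}\|_2^2\,a^{-n-1}\,da\,ds=\tfrac12 A_\psi\|f\|_2^2$ by the isometry \eqref{eq:norme equality}. For the remaining Fourier factor I would substitute the scalar identity above, interchange the order of integration, and evaluate the $(a,s)$-integral by the polar substitution $a\,\overline s\,\underline\eta\,s\mapsto\underline\zeta$; the admissibility constant then surfaces through $\int_{S_n}\int_{\mathbb R_+}|\widehat\psi(a\,\overline s\,\underline\eta\,s)|^2\,a^{-1}\,da\,ds$, which is \emph{independent of $\underline\eta$} and reproduces $A_\psi$ once the normalising power of $2\pi$ in its definition is accounted for. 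Combining the three pieces and dividing through by the resulting factor $\sqrt{A_\psi}\,\|\xi_k\widehat f\|_2$ yields the stated bound.

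The main obstacle is the bookkeeping of the constants: one must track the $(2\pi)^{n/2}$ coming from the convolution theorem, the factor $a^n$ produced by the dilation, and the normalisations built into both the admissibility constant $A_\psi$ and the Haar measure $ds$, and verify that these combine to the precise coefficient $\tfrac{(2\pi)^{n/2}}{2}\sqrt{A_\psi}$. The Clifford-algebraic manipulation is the second delicate point: pushing the spin factors $s,\overline s$ through the Hermitian conjugation and exploiting the scalar character of $\widehat\psi[\widehat\psi]^{\dagger}$ to reduce every weight to a scalar is precisely what allows the $(a,s)$-integral to be decoupled from $\widehat f$, and without it the two factors of the product could not be separated.
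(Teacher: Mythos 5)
You should first note a structural point: the paper never proves Theorem~\ref{MainTheorem} at all --- it is imported verbatim, with citations, from \cite{Banouh1,Banouh2,Banouh3,Banouh4} --- so the only thing to compare against is the proof scheme of those references and the analogous Fourier-domain computation in the paper's own Lemma. Your scheme (freeze $(a,s)$, apply the coordinate-$k$ Heisenberg inequality to $\underline b\mapsto T_\psi[f](a,\underline b,s)$, integrate against $a^{-n-1}da\,ds$, use Cauchy--Schwarz in $(a,s)$, then convert the two factors via the isometry (\ref{eq:norme equality}) and the Fourier-domain expression of the wavelet coefficients) is exactly that standard route, and all of its qualitative steps are sound, including the scalar collapse of $\widehat\psi[\widehat\psi]^\dagger$ and the $\underline\eta$-independence of the $(a,s)$-integral.

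The genuine gap is in the constant, which you yourself identify as the crux but then assert rather than verify. Under the paper's unitary Fourier normalization, your convolution formula $\widehat{g_{a,s}}(\underline\eta)=(2\pi)^{n/2}a^{n/2}s[\widehat\psi(a\overline s\underline\eta s)]^\dagger\overline s\,\widehat f(\underline\eta)$ is correct, and Plancherel in $\underline b$ applied to (\ref{eq:norme equality}) then \emph{forces} $\int_{Spin(n)}\int_{\mathbb{R}^+}|\widehat\psi(a\overline s\underline\eta s)|^2\,a^{-1}da\,ds=A_\psi/(2\pi)^n$. Combining these two ingredients gives
\[
\int_{Spin(n)}\int_{\mathbb{R}^{+}}\left\Vert \eta_{k}\widehat{g_{a,s}}\right\Vert _{2}^{2}\frac{da}{a^{n+1}}ds=(2\pi)^{n}\cdot\frac{A_{\psi}}{(2\pi)^{n}}\left\Vert \xi_{k}\widehat{f}\right\Vert _{2}^{2}=A_{\psi}\left\Vert \xi_{k}\widehat{f}\right\Vert _{2}^{2},
\]
so your chain of inequalities ends with
\[
\left(\int_{Spin(n)}\int_{\mathbb{R}^{+}}\left\Vert b_{k}g_{a,s}\right\Vert _{2}^{2}\frac{da}{a^{n+1}}ds\right)^{\frac{1}{2}}\sqrt{A_{\psi}}\left\Vert \xi_{k}\widehat{f}\right\Vert _{2}\geq\frac{1}{2}A_{\psi}\left\Vert f\right\Vert _{2}^{2},
\]
i.e.\ the final constant $\frac{1}{2}\sqrt{A_{\psi}}$, which is weaker by the factor $(2\pi)^{n/2}$ than the claimed $\frac{(2\pi)^{n/2}}{2}\sqrt{A_{\psi}}$. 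The stated constant can only be produced by instead using the identity $\int\int\Vert\xi_k\widehat{T_\psi[f]}(a,\cdot,s)\Vert_2^2\,a^{-n-1}da\,ds=\frac{A_\psi}{(2\pi)^n}\Vert\xi_k\widehat f\Vert_2^2$, i.e.\ the version of the Fourier-domain formula \emph{without} the $(2\pi)^{n/2}$ factor, which is what the references (and the paper's Lemma) use; but that version is incompatible with your convolution theorem and with (\ref{eq:norme equality}) simultaneously --- you cannot have both. So, as written, your proposal proves a strictly weaker inequality and does not establish the stated bound; closing the gap requires either adopting (and justifying) the references' normalization of $\widehat{T_\psi[f]}$, or acknowledging that the $(2\pi)^{n/2}$ in the statement is a normalization artifact your Plancherel-consistent computation cannot reproduce.
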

\begin{thm}\cite{Banouh2,Banouh3,Banouh4}\label{new cliff uncer} Let $\psi\in L^{2}(\mathbb{R}^{n},\mathbb{R}_{n},dV(\underline{x}))$ be an admissible Clifford mother wavelet. Then for $f\in L^{2}(\mathbb{R}^{n},\mathbb{R}_{n},dV(\underline{x}))$ 	the following inequality holds
\small
\[
(\int_{Spin(n)}\int_{\mathbb{R}^{+}}\left\Vert b_{k}T_{\psi}\left[f\right](a,\bullet,s)\right\Vert_{2} ^{2}\frac{da}{a^{n+1}}ds)^{\frac{1}{2}}\left\Vert \xi_{k}\widehat{f}\right\Vert_{2} \geq\sqrt{2^{n+1}\pi{}^{n}A_{\psi}}\left\{\left\Vert f\right\Vert_{2} ^{2}+2\left|\left\langle f_1,f_2\right\rangle \right|\right\}
\]
\normalsize
where
\[
\begin{cases}
f_1(\underline{x}) & =\dfrac{1}{A_{\psi}}\displaystyle\int_{Spin(n)}\displaystyle\int_{\mathbb{R}^{n}}\int_{\mathbb{R}^{+}}\psi^{a,\underline{b},s}(\underline{x})\partial_{b_{k}}T_{\psi}\left[f\right](a,\underline{b},s)\frac{da}{a^{n+1}}dV(\underline{b})ds\\
f_2(\underline{x}) & =\dfrac{1}{A_{\psi}}\displaystyle\int_{Spin(n)}\int_{\mathbb{R}^{n}}\int_{\mathbb{R}^{+}}\psi^{a,\underline{b},s}(\underline{x})b_{k}T_{\psi}\left[f\right](a,\underline{b},s)\frac{da}{a^{n+1}}dV(\underline{b})ds
\end{cases}
\]
\end{thm}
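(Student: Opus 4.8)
The plan is to obtain this refinement exactly as one obtains the classical Heisenberg inequality: by combining the wavelet Plancherel identity \eqref{eq:norme equality} with an integration by parts in the translation variable $\underline{b}$, but retaining the covariance contribution that the coarser estimate in Theorem \ref{MainTheorem} discards. Fixing the index $k$, I would first rewrite the left-hand side of \eqref{eq:norme equality} by integrating by parts in the single coordinate $b_k$. Since the module is $|T_\psi[f]|^2=(T_\psi[f])^\dagger T_\psi[f]$, the elementary identity $\int_{\mathbb{R}}|g|^2\,db_k=-\int_{\mathbb{R}}b_k\,\partial_{b_k}|g|^2\,db_k$ produces $\partial_{b_k}|T_\psi[f]|^2=2\,\mathrm{Sc}\big[(T_\psi[f])^\dagger\partial_{b_k}T_\psi[f]\big]$, where $\mathrm{Sc}$ denotes the scalar part, once the boundary terms are shown to vanish using the admissibility and decay of $\psi$. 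Integrating against $\frac{da}{a^{n+1}}\,ds$ and invoking \eqref{eq:norme equality} yields the exact identity
\[
\mathcal{A}_{\psi}\|f\|_2^2=-2\,\mathrm{Sc}\int_{Spin(n)}\int_{\mathbb{R}^n}\int_{\mathbb{R}^+}\big(b_kT_\psi[f]\big)^\dagger\,\partial_{b_k}T_\psi[f]\,d\mu(a,\underline{b},s).
\]

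The second step is to recognise this triple integral as an $L^2$-inner product of reconstructed functions. The functions $f_1$ and $f_2$ are precisely the inverse (synthesis) transforms of the modified coefficient fields $\partial_{b_k}T_\psi[f]$ and $b_kT_\psi[f]$, so the Parseval relation attached to the isometry $T_\psi$ converts the integral into $\mathcal{A}_\psi\langle f_2,f_1\rangle$. This yields two lower bounds for the product $P:=\|b_kT_\psi[f]\|_{L^2(d\mu)}\,\|\partial_{b_k}T_\psi[f]\|_{L^2(d\mu)}$: the scalar-part identity above together with $|\mathrm{Sc}(z)|\le|z|$ and Cauchy–Schwarz give $P\ge\frac{\mathcal{A}_\psi}{2}\|f\|_2^2$, while the full Cauchy–Schwarz gives $P\ge\mathcal{A}_\psi|\langle f_1,f_2\rangle|$ (using Hermitian symmetry of the module). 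Adding these two estimates produces the combined bound $P\ge\frac{\mathcal{A}_\psi}{4}\big(\|f\|_2^2+2|\langle f_1,f_2\rangle|\big)$, which is exactly where the extra covariance term enters.

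It then remains to translate the factor $\|\partial_{b_k}T_\psi[f]\|_{L^2(d\mu)}$ back to the Fourier side. Using that $\partial_{b_k}$ of the wavelet transform corresponds, modulo the spin rotation $\overline{s}e_ks$, to the transform of $\partial_{x_k}f$, together with the Plancherel theorem and $\widehat{\partial_{x_k}f}=i\xi_k\widehat f$, one rewrites $\|\partial_{b_k}T_\psi[f]\|_{L^2(d\mu)}$ as a constant multiple of $\|\xi_k\widehat f\|_2$; the Fourier normalisation $(2\pi)^{-n/2}$ is the source of the $2^{n+1}\pi^n$ appearing in the final constant. Substituting this into the combined bound gives the asserted inequality. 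I expect the main obstacle to be the non-commutative bookkeeping of the first two steps rather than the analytic estimates: one must justify the vanishing of the boundary terms, verify that taking the scalar part commutes with the $a$- and spin-integrations, and — most delicately — confirm that the modified coefficient fields $b_kT_\psi[f]$ and $\partial_{b_k}T_\psi[f]$, which need not lie in the range $\mathcal{H}_\psi$ of $T_\psi$, still pair under Parseval exactly as the reconstructions $f_1,f_2$, i.e.\ that the orthogonal projection onto $\mathcal{H}_\psi$ does not spoil the identification of the cross term.
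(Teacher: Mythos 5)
Your plan is, in outline, the canonical argument for this result; note that the paper itself contains no proof of Theorem \ref{new cliff uncer} (it is quoted from \cite{Banouh2,Banouh3,Banouh4}), so the only fair comparison is with the sources' proof, and your route is essentially theirs: integration by parts in $b_k$, two Cauchy--Schwarz-type bounds, identification of the covariance term via the wavelet Parseval relation, and conversion of $\|\partial_{b_k}T_\psi[f]\|_{L^2(d\mu)}$ to $\|\xi_k\widehat f\|_2$. Your averaging of the two bounds is also the right mechanism: it produces exactly the bracket $\|f\|_2^2+2|\langle f_1,f_2\rangle|$, with prefactor $A_\psi/4$ before the conversion step. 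However, two genuine gaps remain.

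First, the point you defer as ``most delicate'' --- whether $b_kT_\psi[f]$ and $\partial_{b_k}T_\psi[f]$, which need not lie in $\mathcal H_\psi$, pair under Parseval exactly as $f_1,f_2$ --- is not a bookkeeping issue but the missing key lemma, and leaving it open leaves your step two unproved. It is closed by the commutation identity
\[
\partial_{b_k}T_\psi[f](a,\underline b,s)=T_\psi[\partial_{x_k}f](a,\underline b,s),
\]
which holds because $\psi^{a,\underline b,s}(\underline x)$ depends on $(\underline x,\underline b)$ only through $\underline x-\underline b$, so $\partial_{b_k}\psi^{a,\underline b,s}=-\partial_{x_k}\psi^{a,\underline b,s}$, and one integrates by parts in $x_k$. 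With the synthesis operator $T_\psi^{*}[F](\underline x)=\int_{S_n}\int_{\mathbb R^n}\int_{\mathbb R_+}\psi^{a,\underline b,s}(\underline x)F(a,\underline b,s)\,d\mu(a,\underline b,s)$, this identity shows that $\partial_{b_k}T_\psi[f]$ lies in the range of $T_\psi$, that $f_1=\partial_{x_k}f$, and that
\[
\langle b_kT_\psi[f],\partial_{b_k}T_\psi[f]\rangle_{L^2(d\mu)}=\langle T_\psi^{*}[b_kT_\psi[f]],\partial_{x_k}f\rangle=A_\psi\langle f_2,f_1\rangle
\]
exactly: only one of the two fields needs to be in $\mathcal H_\psi$, and the projection never intervenes. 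Incidentally, your appeal to a spin rotation $\overline s e_k s$ in step three is a red herring: on the $\underline b$-Fourier side $\partial_{b_k}$ is the scalar multiplier $i\xi_k$, which commutes with all the spin factors. A further point you take for granted is Cauchy--Schwarz itself: for Clifford-algebra-valued inner products it is not automatic (submultiplicativity $|zw|\le|z||w|$ fails in general in $\mathbb R_n$), and the sources prove it as a separate lemma.

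Second, the constant is asserted, never derived, and it cannot come out of your computation as organised. Running your own steps with the paper's Plancherel formula (\ref{eq:norme equality}) gives $\|\partial_{b_k}T_\psi[f]\|_{L^2(d\mu)}=\sqrt{A_\psi}\,\|\xi_k\widehat f\|_2$ --- no factor $(2\pi)^{n/2}$ appears anywhere --- so your final bound is $\frac{\sqrt{A_\psi}}{4}\left(\|f\|_2^2+2|\langle f_1,f_2\rangle|\right)$, short of the stated $\sqrt{2^{n+1}\pi^{n}A_\psi}$ by a factor $4\sqrt2\,(2\pi)^{n/2}$. The $(2\pi)^{n/2}$ in Theorems \ref{MainTheorem} and \ref{new cliff uncer} traces to a different normalisation used in the sources (the one implicit in the Lemma preceding the proof of Theorem \ref{loguncercliffwav}, where effectively $\|T_\psi[g]\|_{L^2(d\mu)}^2=\frac{A_\psi}{(2\pi)^n}\|g\|_2^2$, contradicting (\ref{eq:norme equality})), and under neither normalisation does the averaging argument reproduce the quoted constant with the bracket as written. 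A quick sanity check confirms something is off with the statement itself: since $|\langle f_1,f_2\rangle|\ge|\mathrm{Sc}\,\langle f_1,f_2\rangle|=\frac12\|f\|_2^2$, the quoted inequality would be uniformly $4\sqrt2$ times stronger than Theorem \ref{MainTheorem}, although both are obtained by the same method in the same sources. So to complete your proof you must fix one normalisation, recompute the constant honestly, and accept that it will differ from the one printed here; the structure of your argument is correct, but as written it does not establish the inequality with its stated constant.
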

Now we present the main result of this work. First we give 
\begin{thm}\cite{Gorbachev2016,Lian2020,Beckner1995,Folland1997}\label{uncer princ log fourier} For $f\in\mathcal{S}(\mathbb{R}^n,\mathbb{R}_n,dV(\underline{x}))$
\[
\int_{\mathbb{R}^{n}}\ln\left|\underline{x}\right|\left|f(\underline{x})\right|^{2}dV(\underline{x})+\int_{\mathbb{R}^{n}}\ln\left|\underline{\xi}\right|\left|\widehat{f}(\underline{\xi})\right|^{2}dV(\underline{\xi})\geq (\varphi(\frac{n}{4})+\ln(2))\left\Vert f\right\Vert _{2}^{2}
\]
where $\varphi(t)=\frac{d}{dt}\ln\Gamma(t)=\frac{\Gamma'(t)}{\Gamma(t)}$ is the Digamma function.
\end{thm}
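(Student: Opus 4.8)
The plan is to follow Beckner's strategy of differentiating Pitt's inequality at the exponent $0$, transplanted to the Clifford setting. First I would record \emph{Pitt's inequality} for the Clifford--Fourier transform: for $0\le\lambda<n$ and $f\in\mathcal{S}(\mathbb{R}^n,\mathbb{R}_n,dV(\underline{x}))$,
\[
\int_{\mathbb{R}^n}|\underline{\xi}|^{-\lambda}\,|\widehat{f}(\underline{\xi})|^2\,dV(\underline{\xi})\le C_\lambda\int_{\mathbb{R}^n}|\underline{x}|^{\lambda}\,|f(\underline{x})|^2\,dV(\underline{x}),\qquad C_\lambda=2^{-\lambda}\frac{\Gamma\!\left(\frac{n-\lambda}{4}\right)^2}{\Gamma\!\left(\frac{n+\lambda}{4}\right)^2}.
\]
Because the kernel $e^{-i\langle\underline{x},\underline{\xi}\rangle}$ is scalar, the transform acts componentwise on $f=\sum_A f_A e_A$, and $|f|^2=\sum_A|f_A|^2$, $|\widehat f|^2=\sum_A|\widehat{f_A}|^2$; hence the Clifford estimate follows by applying the classical scalar Pitt inequality to each component $f_A$ and summing, the constant $C_\lambda$ being unchanged.

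Next I would introduce the auxiliary function
\[
G(\lambda)=C_\lambda\int_{\mathbb{R}^n}|\underline{x}|^{\lambda}|f|^2\,dV-\int_{\mathbb{R}^n}|\underline{\xi}|^{-\lambda}|\widehat f|^2\,dV,\qquad\lambda\in[0,n),
\]
so that $G(\lambda)\ge0$ by Pitt, while the Clifford--Plancherel identity $\|\widehat f\|_2=\|f\|_2$ together with $C_0=1$ gives $G(0)=0$. Since $G\ge0$ attains its minimum at the left endpoint $\lambda=0$, the one-sided difference quotient is nonnegative and the right derivative satisfies $G'(0^+)=\lim_{\lambda\to0^+}G(\lambda)/\lambda\ge0$.

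Then I would evaluate $G'(0^+)$ by differentiating under the integral sign. Using $\frac{d}{d\lambda}|\underline{x}|^{\lambda}=|\underline{x}|^{\lambda}\ln|\underline{x}|$ and $\frac{d}{d\lambda}|\underline{\xi}|^{-\lambda}=-|\underline{\xi}|^{-\lambda}\ln|\underline{\xi}|$, together with
\[
\frac{C'_\lambda}{C_\lambda}=\frac{d}{d\lambda}\ln C_\lambda=-\ln2-\tfrac12\varphi\!\left(\tfrac{n-\lambda}{4}\right)-\tfrac12\varphi\!\left(\tfrac{n+\lambda}{4}\right),
\]
evaluation at $\lambda=0$ (where $C_0=1$) yields $C'(0)=-\ln2-\varphi(n/4)$ and hence
\[
G'(0^+)=\Bigl(-\ln2-\varphi(\tfrac n4)\Bigr)\|f\|_2^2+\int_{\mathbb{R}^n}\ln|\underline{x}|\,|f|^2\,dV+\int_{\mathbb{R}^n}\ln|\underline{\xi}|\,|\widehat f|^2\,dV\ge0,
\]
which is exactly the claimed inequality after rearrangement.

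The main obstacle will be the analytic justification of the differentiation under the integral sign at $\lambda=0$, i.e. controlling the families $|\underline{x}|^{\lambda}|f|^2$ and $|\underline{\xi}|^{-\lambda}|\widehat f|^2$ and their logarithmic $\lambda$-derivatives by fixed integrable majorants on a neighborhood $[0,\varepsilon)$. Away from the origin this is immediate since $f,\widehat f\in\mathcal{S}$ decay rapidly and $\ln|\underline{\xi}|$ grows only logarithmically; near $\underline{\xi}=0$ the singular weight $|\underline{\xi}|^{-\lambda}\ln|\underline{\xi}|$ remains integrable for small $\lambda$ because $\widehat f$ is bounded and $n\ge1$, so dominated convergence applies and the one-sided derivative equals the differentiated integral. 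Verifying the precise value of $C'(0)$ through the digamma function, and confirming that the symmetric normalization $(2\pi)^{-n/2}$ of $\widehat f$ forces the factor $2^{-\lambda}$ in $C_\lambda$ (thereby producing the $\ln2$ term rather than a $\ln\pi$ term), is the other point requiring care.
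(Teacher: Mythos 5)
Your proof is correct, but note that the paper itself never proves this statement: Theorem \ref{uncer princ log fourier} is imported by citation from Beckner, Gorbachev--Ivanov--Tikhonov, Lian, and Folland--Sitaram, and the paper's own work only begins with the lemma and Theorem \ref{loguncercliffwav} that build on it. What you have written is a faithful reconstruction of the argument in those cited sources, transplanted to the Clifford-valued setting: Pitt's inequality, the auxiliary function $G(\lambda)=C_\lambda\int|\underline{x}|^{\lambda}|f|^2\,dV-\int|\underline{\xi}|^{-\lambda}|\widehat f|^2\,dV$ vanishing at $\lambda=0$ by Plancherel, and the nonnegativity of the one-sided derivative $G'(0^+)$. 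Your two points of care are exactly the right ones and you resolve both correctly: the componentwise reduction is legitimate because the paper's Clifford--Fourier kernel $e^{-i\langle\underline{x},\underline{\xi}\rangle}$ is scalar, so $\widehat f=\sum_A\widehat{f_A}e_A$ and $|\widehat f|^2=\sum_A|\widehat{f_A}|^2$, letting the scalar Pitt inequality be summed over components with an unchanged constant; and the normalization bookkeeping is right, since under the $(2\pi)^{-n/2}e^{-i\langle\underline{x},\underline{\xi}\rangle}$ convention Beckner's constant $\pi^{\lambda}\bigl[\Gamma(\frac{n-\lambda}{4})/\Gamma(\frac{n+\lambda}{4})\bigr]^2$ becomes $2^{-\lambda}\bigl[\Gamma(\frac{n-\lambda}{4})/\Gamma(\frac{n+\lambda}{4})\bigr]^2$, whose logarithmic derivative at $0$ produces precisely the $\varphi(\frac n4)+\ln 2$ of the statement (rather than Beckner's $\varphi(\frac n4)-\ln\pi$). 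In short, you supply the proof the paper outsources; it buys the reader a self-contained verification that the constant in the theorem is the correct one for this paper's Fourier normalization, a point the paper leaves implicit and which would otherwise be easy to get wrong when comparing conventions across the cited references.
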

\begin{thm}\label{loguncercliffwav} Let $\psi\in L^{2}(\mathbb{R}^{n},\mathbb{R}_{n},dV(\underline{x}))$ be an admissible Clifford mother wavelet. Then for $f\in\mathcal{S}(\mathbb{R}^{n},\mathbb{R}_{n},dV(\underline{x}))$
\[
\int_{Spin(n)}\int_{\mathbb{R}^{+}}\int_{\mathbb{R}^{n}}\ln\left|\underline{b}\right|\left|T_{\psi}\left[f\right](a,\underline{b},s)\right|^{2}dV(\underline{b})\frac{da}{a^{n+1}}ds	\]
\[
+\frac{A_{\psi}}{(2\pi)^{n}}\int_{\mathbb{R}^{n}}\ln\left|\underline{\xi}\right|\left|\widehat{f}(\underline{\xi})\right|^{2}dV(\underline{\xi})\geq(\frac{\Gamma'(\frac{n}{4})}{\Gamma(\frac{n}{4})}+\ln(2))A_{\psi}\left\Vert f\right\Vert_{2}^{2}.
\]
\end{thm}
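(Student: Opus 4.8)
The plan is to reduce the assertion to the Clifford--Fourier logarithmic inequality of Theorem~\ref{uncer princ log fourier} by freezing the scale $a$ and the spin $s$ and treating the wavelet transform as an ordinary function of the translation variable $\underline{b}$. Concretely, for fixed $(a,s)\in\mathbb{R}^{+}\times Spin(n)$ I would set $g_{a,s}(\underline{b})=T_{\psi}[f](a,\underline{b},s)$ and first verify that $g_{a,s}\in\mathcal{S}(\mathbb{R}^{n},\mathbb{R}_{n})$ whenever $f\in\mathcal{S}$ and $\psi$ is admissible, so that Theorem~\ref{uncer princ log fourier} may be applied to $g_{a,s}$ in the variable $\underline{b}$. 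This gives, for each $(a,s)$,
\[
\int_{\mathbb{R}^{n}}\ln|\underline{b}|\,|g_{a,s}(\underline{b})|^{2}\,dV(\underline{b})+\int_{\mathbb{R}^{n}}\ln|\underline{\xi}|\,|\widehat{g_{a,s}}(\underline{\xi})|^{2}\,dV(\underline{\xi})\geq\left(\frac{\Gamma'(n/4)}{\Gamma(n/4)}+\ln 2\right)\|g_{a,s}\|_{2}^{2},
\]
where $\widehat{g_{a,s}}$ denotes the Clifford--Fourier transform taken in $\underline{b}$, and where I have used that the Digamma value $\varphi(n/4)$ of Theorem~\ref{uncer princ log fourier} is exactly $\Gamma'(n/4)/\Gamma(n/4)$.

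Next I would integrate this family of inequalities against the measure $\frac{da}{a^{n+1}}\,ds$ over $\mathbb{R}^{+}\times Spin(n)$. The first term reproduces verbatim the position-side term in the conclusion. On the right-hand side, integrating $\|g_{a,s}\|_{2}^{2}$ is precisely the isometry identity \eqref{eq:norme equality}, so this collapses to $\left(\frac{\Gamma'(n/4)}{\Gamma(n/4)}+\ln 2\right)A_{\psi}\|f\|_{2}^{2}$, the asserted lower bound. It then remains only to identify the integrated Fourier-side term with $\frac{A_{\psi}}{(2\pi)^{n}}\int_{\mathbb{R}^{n}}\ln|\underline{\xi}|\,|\widehat{f}(\underline{\xi})|^{2}\,dV(\underline{\xi})$.

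For that term I would exploit the convolution structure of the transform in the translation variable. A change of variables shows that $\widehat{g_{a,s}}(\underline{\xi})$ equals, up to a normalizing power of $2\pi$, the product $[\widehat{\psi^{a,s}}(\underline{\xi})]^{\dagger}\widehat{f}(\underline{\xi})$, where $\widehat{\psi^{a,s}}(\underline{\xi})=a^{n/2}s\,\widehat{\psi}(a\overline{s}\underline{\xi}s)\,\overline{s}$. Using $s\overline{s}=1$ and the standing hypothesis that $\widehat{\psi}(\underline{\eta})[\widehat{\psi}(\underline{\eta})]^{\dagger}$ is scalar, the modulus factorizes so that $|\widehat{g_{a,s}}(\underline{\xi})|^{2}$ is a scalar multiple of $\widehat{\psi}(a\overline{s}\underline{\xi}s)[\widehat{\psi}(a\overline{s}\underline{\xi}s)]^{\dagger}\,|\widehat{f}(\underline{\xi})|^{2}$, the scalar factor commuting freely past $\widehat{f}$. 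After a Fubini interchange (licit for Schwartz $f$ and admissible $\psi$) the inner integral over $(a,s)$ is evaluated by the substitution $\rho=a|\underline{\xi}|$ together with the transitivity of the $Spin(n)$-action on the unit sphere; both reductions exhibit it as independent of $\underline{\xi}$ and proportional to the admissibility integral defining $A_{\psi}$, which is what produces the stated coefficient in front of the surviving integral $\int_{\mathbb{R}^{n}}\ln|\underline{\xi}|\,|\widehat{f}(\underline{\xi})|^{2}\,dV(\underline{\xi})$.

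The main obstacle I anticipate is exactly this last reduction. Establishing the convolution identity in the noncommutative spin setting requires careful bookkeeping of the left and right actions by $s$ and $\overline{s}$; the $\ln|\underline{\xi}|$ weight (unlike a bounded one) makes the Fubini interchange a genuine integrability question rather than a formality; and, most delicately, one must track all multiplicative constants so that the rotation-averaged radial integral matches the normalization of $A_{\psi}$ and remains consistent with the Plancherel identity \eqref{eq:norme equality}. Once these points are settled, the result follows by simply summing the position term, the norm term, and the Fourier term obtained above.
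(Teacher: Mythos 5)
Your proposal is correct and follows essentially the same route as the paper: apply the logarithmic Fourier uncertainty (Theorem~\ref{uncer princ log fourier}) to $\underline{b}\mapsto T_{\psi}[f](a,\underline{b},s)$ at fixed $(a,s)$, integrate against $\frac{da}{a^{n+1}}\,ds$, use the isometry \eqref{eq:norme equality} on the right-hand side, and convert the Fourier-side term via the factorization $\widehat{T_{\psi}[f]}(a,\underline{\xi},s)=a^{n/2}\,\overline{s}\,[\widehat{\psi}(a\overline{s}\underline{\xi}s)]^{\dagger}s\,\widehat{f}(\underline{\xi})$ together with the admissibility identity $\int_{Spin(n)}\int_{\mathbb{R}^{+}}[\widehat{\psi}(a\overline{s}\underline{\xi}s)]^{\dagger}\widehat{\psi}(a\overline{s}\underline{\xi}s)\frac{da}{a}\,ds=\frac{A_{\psi}}{(2\pi)^{n}}$, which is exactly the paper's Lemma. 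The only cosmetic difference is that you propose to re-derive that admissibility integral (via the substitution $\rho=a|\underline{\xi}|$ and spin transitivity), whereas the paper simply cites it from prior work; your attention to the Schwartz-class and Fubini justifications is, if anything, more careful than the paper's.
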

To prove this result we need the following lemma.
\begin{lem}
\[
\int_{Spin(n)}\int_{\mathbb{R}^{+}}\int_{\mathbb{R}^{n}}\ln\left|\underline{\xi}\right|\left|\widehat{T_{\psi}\left[f\right]}(a,\underline{\xi},s)\right|^{2}dV(\underline{\xi})\frac{da}{a^{n+1}}ds
\]
\[
=\frac{A_{\psi}}{(2\pi)^{n}}\int_{\mathbb{R}^{n}}\ln\left|\underline{\xi}\right|\left|\widehat{f}(\underline{\xi})\right|^{2}dV(\underline{\xi})
\]
\end{lem}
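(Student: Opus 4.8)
The plan is to reduce the three-fold integral on the left to a single frequency integral: I would first Fourier-transform (in the translation variable $\underline b$) the wavelet transform, and then carry out the $a$- and $s$-integrations pointwise in $\underline\xi$. Since the weight $\ln|\underline\xi|$ depends only on $\underline\xi$ and not on $(a,s)$, it factors out of the $(a,s)$-integration; hence it suffices to establish the pointwise (in $\underline\xi$) identity
\[
\int_{Spin(n)}\int_{\mathbb R^{+}}\bigl|\widehat{T_\psi[f]}(a,\underline\xi,s)\bigr|^{2}\frac{da}{a^{n+1}}\,ds=\frac{A_\psi}{(2\pi)^{n}}\,\bigl|\widehat f(\underline\xi)\bigr|^{2},
\]
and then to multiply through by $\ln|\underline\xi|$ and integrate over $\mathbb R^{n}$.

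First I would observe that, as a function of $\underline b$, the quantity $T_\psi[f](a,\underline b,s)$ is a Clifford convolution of $f$ against a dilated, rotated and reflected copy of $\psi^{\dagger}$. Because the Fourier kernel $e^{-i\langle\underline x,\underline\xi\rangle}$ is a complex scalar and therefore commutes with every Clifford element, the convolution theorem applies verbatim (with the order of the non-commuting factors preserved), giving an expression of the form $\widehat{T_\psi[f]}(a,\underline\xi,s)=(2\pi)^{n/2}a^{n/2}\,s\,[\widehat\psi(a\,\overline s\,\underline\xi\,s)]^{\dagger}\,\overline s\,\widehat f(\underline\xi)$ after the substitution $\underline u=\overline s(\underline x-\underline b)s/a$, whose Jacobian is $a^{n}$ since conjugation by $s$ is an orthogonal map of unit determinant.

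Next I would compute $|\widehat{T_\psi[f]}|^{2}=(\widehat{T_\psi[f]})^{\dagger}\widehat{T_\psi[f]}$, using $s^{\dagger}=\overline s$, $\overline s^{\dagger}=s$ and $s\overline s=\overline s s=1$ to cancel the spin factors, and then invoking the admissibility hypothesis that $\widehat\psi(\underline\eta)[\widehat\psi(\underline\eta)]^{\dagger}$ is scalar, so that it commutes freely through the remaining Clifford factors. This collapses the modulus to $|\widehat{T_\psi[f]}(a,\underline\xi,s)|^{2}=(2\pi)^{n}a^{n}\,\widehat\psi(a\,\overline s\,\underline\xi\,s)[\widehat\psi(a\,\overline s\,\underline\xi\,s)]^{\dagger}\,|\widehat f(\underline\xi)|^{2}$. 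Inserting this into the $(a,s)$-integral and cancelling $a^{n}$ against $a^{-n-1}$ leaves $\int_{Spin(n)}\int_0^\infty\widehat\psi(a\,\overline s\,\underline\xi\,s)[\widehat\psi(a\,\overline s\,\underline\xi\,s)]^{\dagger}\frac{da}{a}\,ds$ as the coefficient multiplying $|\widehat f(\underline\xi)|^{2}$.

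The crux — and the step I expect to demand the most care — is showing that this coefficient equals the constant $A_\psi/(2\pi)^{n}$, independent of $\underline\xi$. I would first substitute $t=a|\underline\xi|$ in the radial integral, which removes the dependence on $|\underline\xi|$ and replaces $a\,\overline s\,\underline\xi\,s$ by $t\,\overline s\,(\underline\xi/|\underline\xi|)\,s$; then I would use that $s\mapsto\overline s(\cdot)s$ is the double cover $Spin(n)\to SO(n)$ acting transitively on the unit sphere, so that integrating against the Haar measure $ds$ and re-expressing the result in polar coordinates reconstitutes exactly the admissibility integral $(2\pi)^{n}\int_{\mathbb R^{n}}\widehat\psi(\underline\eta)[\widehat\psi(\underline\eta)]^{\dagger}|\underline\eta|^{-n}\,dV(\underline\eta)=A_\psi$. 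The bookkeeping of the $(2\pi)$-powers and of the Haar-measure normalization (fixed so as to be consistent with the isometry \eqref{eq:norme equality}) is the only delicate point; once the pointwise identity is secured, multiplying by $\ln|\underline\xi|$ and integrating over $\underline\xi$ yields the lemma at once.
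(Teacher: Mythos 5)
You follow essentially the same route as the paper: Fourier-transform $T_\psi[f]$ in the translation variable $\underline b$, cancel the spin factors via $s^\dagger=\overline s$, $\overline s^{\,\dagger}=s$, $s\overline s=1$, use the admissibility hypothesis that $\widehat\psi\,[\widehat\psi]^\dagger$ is scalar to commute it free, and reduce the left-hand side to a coefficient of the form $\int_{Spin(n)}\int_0^\infty\widehat\psi(a\overline s\underline\xi s)[\widehat\psi(a\overline s\underline\xi s)]^\dagger\frac{da}{a}\,ds$ multiplying $|\widehat f(\underline\xi)|^2$, with the weight $\ln|\underline\xi|$ riding along passively. The only genuine difference is that you rederive the two ingredients the paper simply cites: the formula for $\widehat{T_\psi[f]}(a,\cdot,s)$, and the evaluation of the $(a,s)$-integral, which the paper quotes as equation 3.5 of \cite{Banouh2,Banouh4}. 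Your derivations of both are sound (your placement of the spin factors, $s[\,\cdot\,]^\dagger\overline s$ rather than the paper's $\overline s[\,\cdot\,]^\dagger s$, is immaterial since they cancel), and making these steps self-contained is a real improvement in rigor.

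There is, however, an inconsistency in your $(2\pi)$-bookkeeping, precisely the point you flagged as delicate. Your paragraph-two formula
\[
\bigl|\widehat{T_\psi[f]}(a,\underline\xi,s)\bigr|^2=(2\pi)^n a^n\,\widehat\psi(a\overline s\underline\xi s)\bigl[\widehat\psi(a\overline s\underline\xi s)\bigr]^\dagger\,\bigl|\widehat f(\underline\xi)\bigr|^2
\]
is the correct consequence of the convolution theorem under the paper's normalization $\widehat f(\underline\xi)=(2\pi)^{-n/2}\int e^{-i\langle\underline x,\underline\xi\rangle}f(\underline x)\,dV(\underline x)$. But in paragraph three the factor $(2\pi)^n$ silently disappears: you declare the coefficient of $|\widehat f(\underline\xi)|^2$ to be the bare integral $\int\int\widehat\psi\,[\widehat\psi]^\dagger\frac{da}{a}ds$ and evaluate it (correctly, given the Haar normalization) as $A_\psi/(2\pi)^n$. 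Carried through consistently, your own computation yields the constant $(2\pi)^n\cdot A_\psi/(2\pi)^n=A_\psi$, i.e.
\[
\int_{Spin(n)}\int_{\mathbb{R}^+}\bigl|\widehat{T_\psi[f]}(a,\underline\xi,s)\bigr|^2\frac{da}{a^{n+1}}\,ds=A_\psi\,\bigl|\widehat f(\underline\xi)\bigr|^2,
\]
not $\frac{A_\psi}{(2\pi)^n}\bigl|\widehat f(\underline\xi)\bigr|^2$. This cannot be repaired by redoing the polar-coordinate step: replacing the weight $\ln|\underline\xi|$ by $1$, integrating in $\underline\xi$, and using Plancherel in $\underline b$ together with (\ref{eq:norme equality}) forces the constant to be $A_\psi$. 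The paper lands on $A_\psi/(2\pi)^n$ only because its own proof writes $\widehat{T_\psi[f]}=a^{n/2}\overline s[\widehat\psi(a\overline s\underline\xi s)]^\dagger s\widehat f$ \emph{without} the $(2\pi)^{n/2}$ that the convolution theorem produces under the stated Fourier normalization. So, as written, your proof reaches the stated constant only through the dropped factor; made internally consistent, it instead proves the identity with constant $A_\psi$, exposing a normalization discrepancy between your (correct) transform formula and the statement as given.
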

\begin{proof} We have
\[
\int_{Spin(n)}\int_{\mathbb{R}^{+}}\int_{\mathbb{R}^{n}}\ln\left|\underline{\xi}\right|\left|\widehat{T_{\psi}\left[f\right]}(a,\underline{\xi},s)\right|^{2}dV(\underline{\xi})\frac{da}{a^{n+1}}ds
\]
\[
=\int_{Spin(n)}\int_{\mathbb{R}^{+}}\int_{\mathbb{R}^{n}}\ln\left|\underline{\xi}\right|\left|a^{\frac{n}{2}}\overline{s}\left[\widehat{\psi}(a\overline{s}\underline{\xi}s)\right]^{\dagger}s\widehat{f}(\underline{\xi})\right|^{2}dV(\underline{\xi})\frac{da}{a^{n+1}}ds
\]
\[
=\int_{Spin(n)}\int_{\mathbb{R}^{+}}\int_{\mathbb{R}^{n}}\ln\left|\underline{\xi}\right|a^{\frac{n}{2}}\overline{s}\left[\widehat{\psi}(a\overline{s}\underline{\xi}s)\right]^{\dagger}s\widehat{f}(\underline{\xi})
\]
\[
\left[a^{\frac{n}{2}}\overline{s}\left[\widehat{\psi}(a\overline{s}\underline{\xi}s)\right]^{\dagger}s\widehat{f}(\underline{\xi})\right]^{\dagger}dV(\underline{\xi})\frac{da}{a^{n+1}}ds
\]
\[
=\int_{Spin(n)}\int_{\mathbb{R}^{+}}\int_{\mathbb{R}^{n}}\ln\left|\underline{\xi}\right|a^{\frac{n}{2}}\overline{s}\left[\widehat{\psi}(a\overline{s}\underline{\xi}s)\right]^{\dagger}
\]
\[
s\widehat{f}(\underline{\xi})\left[\widehat{f}(\underline{\xi})\right]^{\dagger}\overline{s}\widehat{\psi}(a\overline{s}\underline{\xi}s)sa^{\frac{n}{2}}dV(\underline{\xi})\frac{da}{a^{n+1}}ds.
\]
As $s\overline{s}=1$, we obtain
\[
\int_{Spin(n)}\int_{\mathbb{R}^{+}}\int_{\mathbb{R}^{n}}\ln\left|\underline{\xi}\right|\left|\widehat{T_{\psi}\left[f\right]}(a,\underline{\xi},s)\right|^{2}dV(\underline{\xi})\frac{da}{a^{n+1}}ds
\]
\[
=\int_{Spin(n)}\int_{\mathbb{R}^{+}}\int_{\mathbb{R}^{n}}\ln\left|\underline{\xi}\right|\left[\widehat{\psi}(a\overline{s}\underline{\xi}s)\right]^{\dagger}\widehat{f}(\underline{\xi})\left[\widehat{f}(\underline{\xi})\right]^{\dagger}\widehat{\psi}(a\overline{s}\underline{\xi}s)dV(\underline{\xi})\frac{da}{a}ds.
\]
we know from \cite[equation 3.5]{Banouh2,Banouh4} that $$\int_{Spin(n)}\int_{\mathbb{R}^{+}}\left[\widehat{\psi}(a\overline{s}\underline{\xi}s)\right]^{\dagger}\widehat{\psi}(a\overline{s}\underline{\xi}s)\frac{da}{a}ds=\frac{A_{\psi}}{(2\pi)^{n}}.
$$ 
So, we get
\begin{equation}
\int_{Spin(n)}\int_{\mathbb{R}^{+}}\int_{\mathbb{R}^{n}}\ln\left|\underline{\xi}\right|\left|\widehat{T_{\psi}\left[f\right]}(a,\underline{\xi},s)\right|^{2}dV(\underline{\xi})\frac{da}{a^{n+1}}ds
\end{equation}
\[
=\frac{A_{\psi}}{(2\pi)^{n}}\int_{\mathbb{R}^{n}}\ln\left|\underline{\xi}\right|\left|\widehat{f}(\underline{\xi})\right|^{2}dV(\underline{\xi}\label{lemma}.
\]
\end{proof}
\begin{proof}[Proof of theorem \ref{loguncercliffwav}] First we apply theorem \ref{uncer princ log fourier} to the function $\underline{b}\longmapsto T_{\psi}\left[f\right](a,\underline{b},s)$. So we obtain
\[
\int_{\mathbb{R}^{n}}\ln\left|\underline{b}\right|\left|T_{\psi}\left[f\right](a,\underline{b},s)\right|^{2}dV(\underline{b})+\int_{\mathbb{R}^{n}}\ln\left|\underline{\xi}\right|\left|\widehat{T_{\psi}\left[f\right]}(a,\underline{\xi},s)\right|^{2}dV(\underline{\xi})
\]
\[
\geq(\frac{\Gamma'(\frac{n}{4})}{\Gamma(\frac{n}{4})}+\ln(2))\left\Vert T_{\psi}\left[f\right]\right\Vert _{2}^{2}.
\]
Integrating by respect to the measure $\frac{da}{a^{n+1}}ds$ and taking in account Lemma \ref{lemma} and Formula (\ref{eq:norme equality}) we obtain
$$\int_{Spin(n)}\int_{\mathbb{R}^{+}}\int_{\mathbb{R}^{n}}\ln\left|\underline{b}\right|\left|T_{\psi}\left[f\right](a,\underline{b},s)\right|^{2}dV(\underline{b})\frac{da}{a^{n+1}}ds$$
$$+\frac{A_{\psi}}{(2\pi)^{n}}\int_{\mathbb{R}^{n}}\ln\left|\underline{\xi}\right|\left|\widehat{f}(\underline{\xi})\right|^{2}dV(\underline{\xi})\geq(\frac{\Gamma'(\frac{n}{4})}{\Gamma(\frac{n}{4})}+\ln(2))A_{\psi}\left\Vert f\right\Vert _{2}^{2}$$
\end{proof}
%\section{Data availability statement}
%
%Data sharing is not applicable to this article as no new data were
%created or analysed in this study.

%\nocite{*}
\section{Conclusion}
In this paper, a logarithmic uncertainty principle based on the continuous wavelet transform in the Clifford algebra's settings has been formulated and proved. Compared to existing formulations in Clifford framework, the proposed uncertainty principle here is sharper.
\section{Data availability statement}
Data sharing is not applicable to this article as no new data were created or analyzed in this study.
\section{Conflict of interest statement} All authors declare that they have no conflicts of interest regarding the present work.

\end{document}